\newcommand{\code}{\ensuremath{\mathcal{C}}}
\newcommand{\dmin}{\ensuremath{d_\mathrm{min}}}
\newcommand{\Hd}[1]{\ensuremath{\mathrm{d}_\mathrm{H}(#1)}}
\newcommand{\F}{\ensuremath{\mathbbm{F}}}
\renewcommand{\vec}[1]{\ensuremath{\mathbf{#1}}}
\theoremstyle{plain}
   \newtheorem{definition}{Definition}}
\theoremstyle{plain}
   \newtheorem{theorem}{Theorem}}
\begin{document}

\title{End--to--End Algebraic Network Coding for Wireless TCP/IP Networks}

\IEEEoverridecommandlockouts

\author{
\authorblockN{Christian Senger, Steffen Schober, Tong Mao, Alexander Zeh}\thanks{Christian Senger is supported by DFG,
Germany, under grant BO~867/17.}
\authorblockA{\small Institute of Telecommunications and Applied Information Theory\\
Ulm University, Ulm, Germany \\
\{christian.senger$\;\vert\;$steffen.schober$\;\vert\;$tong.mao$\;\vert\;$alexander.zeh\}@uni-ulm.de}
}

\maketitle

\begin{abstract}
The {\em Transmission Control Protocol (TCP)} was designed to provide reliable transport services in wired networks. In such networks, packet losses mainly occur due to congestion. Hence, TCP was designed to apply congestion avoidance techniques to cope with packet losses. Nowadays, TCP is also utilized in wireless networks where, besides congestion, numerous other reasons for packet losses exist. This results in reduced throughput and increased transmission round--trip time when the state of the wireless channel is bad. We propose a new network layer, that transparently sits below the transport layer and hides non congestion--imposed packet losses from TCP. 
The network coding in this new layer is based on the well--known class of {\em Maximum Distance Separable (MDS)} codes.
\end{abstract}

\IEEEpeerreviewmaketitle

\section{Introduction}\label{sec:intro}

Current versions of TCP \cite{rfc793} like TCP--CUBIC \cite{ha_rhee_xu:2008} and Compound--TCP \cite{tan_song_zhang_sridharan:2005} use elaborate techniques to avoid and to cope with network congestion, which is the main reason for packet losses in wired networks. However, the increasing deployment of wireless networks like IEEE-802.11 (WLAN) or IEEE-802.16 (WiMAX) imposes new challenges on TCP as a reliable transport protocol. In wireless networks, packet losses can occur due to effects like shadowing, interference, and multipath propagation to mention only a few. The problem is, that TCP in all it's current implementations reacts on packet losses with some congestion avoidance strategy, thus reducing throughput and transmission round--trip time.

During the last few years, network coding became a prominent field of interest in the communications- and coding community. The main idea is to view packets as elements of an algebraic structure, e.g. a finite field or a vector space. Then, packet losses correspond to erasures and there exist well--known algebraic techniques to correct them. The most important example of this approach is probably random network coding \cite{ho_koetter_medard_karger_effros:2003} based on the Rank metric \cite{silva_kschischang_koetter:2007, silva_kschischang:2007a, koetter_kschischang:2008}, the corresponding codes are frequently called Rank-- or Gabidulin codes \cite{gabidulin:1985, gabidulin_bossert:2008, bossert_gabidulin:2009}. 
In this paper we follow a different approach for network coding, first proposed by Kabatiansky and Krouk \cite{kabatiansky_krouk_semenov:2005} and further developed and extended by Sidorenko et al. \cite{sidorenko_shen_krouk_bossert:2008}.
It utilizes MDS codes and the traditional Hamming metric. With MDS codes, it is possible to correct errors and erasures, the tradeoff parameter between the two can be selected by the system designer.

The idea of combining TCP with network coding was first presented by Sundararajan et al. \cite{sundararajan_sha_medard_mitzenmacher_barros:2008}, following the Rank metric approach.
We complement their work by following the Hamming metric approach and by introducing a new {\em MDS Network Coding layer (NC layer)} between the IP and TCP layer.
The NC layer is transparent to the established layers which enables an easy deployment.

The paper is organized as follows. In Section~\ref{sec:MDS} we explain MDS codes and show their most important property for network coding. In Section~\ref{sec:layer}
we explain the NC layer together with algorithmic descriptions of the transmitter and receiver sides. The properties and capabilities of TCP with the NC layer are
investigated in Section~\ref{sec:sim} by simulation. We also analyze how network coding--capable nodes affect a mixed network environment.

\section{Maximum Distance Separable (MDS) Codes}\label{sec:MDS}

Let $\vec{a}:=(a_0, \ldots, a_{n-1})$ and $\vec{b}:=(b_0, \ldots, b_{n-1})$ be two vectors over an extended Galois field $\F_q$ with $q=2^m$
where $m$ is a non-negative integer.
The {\em Hamming distance} $\Hd{\vec{a}, \vec{b}}$ is defined as the number of differing coordinates of $\vec{a}$ and $\vec{b}$, i.e.
\begin{equation*}
\Hd{\vec{a}, \vec{b}}:=\vert\{i:a_i\neq b_i, 0\leq i\leq n-1\}\vert.
\end{equation*}

Let $\mathcal C$ be a subspace of $\F_q^n$ with dimension $k$. 
Define the {\em minimum distance $\dmin$} as
\begin{equation*}
\dmin:=\min_{\vec{a}, \vec{b}\in {\mathcal C}}\{\Hd{\vec{a}, \vec{b}}\}.
\end{equation*}
Then, $\code(q;n,k, \dmin) := {\mathcal C}$ is called a $q$-nary linear block code of length $n$, dimension $k$ and minimum distance $\dmin$.

One basic result of Coding Theory is the {\em Singleton Bound}, which bounds the minimum distance for given code length and dimension. The following results were established in \cite{singleton:1964}, see also \cite{macwilliams_sloane:1992}.

\begin{theorem}[Singleton Bound]
The minimum distance of a block code $\code(q; n, k, \dmin)$ is bounded by
\begin{equation*}
\dmin\leq n-k+1.
\end{equation*}
\end{theorem}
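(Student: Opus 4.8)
The plan is to prove the bound by a projection (puncturing) argument, which uses only that $\code$ contains $q^k$ distinct codewords that are pairwise at Hamming distance at least $\dmin$. First I would introduce the coordinate projection $\pi\colon\F_q^n\to\F_q^{\,n-\dmin+1}$ that simply drops the last $\dmin-1$ coordinates of a vector. This is well defined because $\dmin\leq n$: any two vectors of length $n$ differ in at most $n$ positions, so in particular $n-\dmin+1\geq 1$. I would then restrict $\pi$ to $\code$ and argue that this restriction is injective: if $\vec{a},\vec{b}\in\code$ satisfy $\pi(\vec{a})=\pi(\vec{b})$, then $\vec{a}$ and $\vec{b}$ agree in their first $n-\dmin+1$ coordinates, hence $\Hd{\vec{a},\vec{b}}\leq\dmin-1$; since $\dmin$ is the minimum of $\Hd{\cdot,\cdot}$ over distinct codeword pairs, this is only possible if $\vec{a}=\vec{b}$.

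Given this injectivity, the image $\pi(\code)$ is a subset of $\F_q^{\,n-\dmin+1}$ of size $|\code|=q^k$, so $q^k\leq q^{\,n-\dmin+1}$. Since $q\geq 2$, comparing exponents yields $k\leq n-\dmin+1$, which rearranges to the claimed inequality $\dmin\leq n-k+1$.

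The only real subtlety, and the step I expect to need the most care, is the injectivity argument: getting the bookkeeping right between "agree on $n-\dmin+1$ coordinates" and "differ in at most $\dmin-1$ coordinates", and then correctly invoking the minimality defining $\dmin$. Everything else is elementary counting, and the degenerate cases ($\dmin=1$, where nothing is projected away, or $k=0$) are handled automatically. I would close by remarking that linearity of $\code$ is never used here, so the same bound holds for an arbitrary block code with $q^k$ codewords; alternatively, in the linear case one could bring a generator matrix into systematic form $[\,I_k\mid A\,]$ and note that each of its rows is a nonzero codeword of weight at most $1+(n-k)$, but the projection proof is shorter and more general.
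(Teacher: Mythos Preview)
Your proof is correct and follows essentially the same route as the paper: both arguments reduce to the counting inequality $q^k\leq q^{\,n-\dmin+1}$ obtained by observing that distinct codewords cannot agree on any fixed set of $n-\dmin+1$ coordinates. Your version makes the projection map and its injectivity explicit, which is cleaner than the paper's somewhat terse phrasing, but the underlying idea is identical.
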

\begin{proof}
By definition of $\dmin$ there are no two codewords in $\code$ with $\dmin$ coinciding coordinates, but there might be codewords with $\dmin-1$ coinciding coordinates. Their quantity is at most $q^{n-\dmin+1}$. The total number of codewords must be smaller or equal than this quantity, hence $q^k\leq q^{n-\dmin+1}$.
\end{proof}

An interesting subset of the block codes are those, that have maximum possible minimum distance.

\begin{definition}[MDS Code]
A block code $\code(q; n, k, \dmin)$ is called {\em Maximum Distance Separable (MDS)} if it fulfills the Singleton Bound with equality, i.e.
\begin{equation*}
\dmin= n-k+1.
\end{equation*}
\end{definition}

It can be shown that there are no MDS codes over $\F_2$ besides the repetition codes, the single parity check codes and the codes without any redundancy. Probably, the most important MDS codes are the {\em Reed--Solomon (RS)} codes \cite{reed_solomon:1960, macwilliams_sloane:1992} over $\F_q$ with $q>2$. In our considerations, we will always assume that $q$ is a power of two, i.e. $q=2^m$. In this case, Elements of $\F_q$ can be represented by binary vectors of length $m$. We do not use any of the code's properties in this paper except that they are MDS.

The core of our proposed scheme is the following well--known property of MDS codes.

\begin{theorem}\label{theorem:k-is-enough}
If $\code(q; n, k, \dmin)$ is an MDS code, then any $k$ coordinates of $\vec{c}\in\code$ unambiguously determine $\vec{c}$.
\end{theorem}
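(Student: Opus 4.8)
The statement to prove is that for an MDS code $\code(q;n,k,\dmin)$, any $k$ coordinate positions of a codeword $\vec{c}$ determine $\vec{c}$ uniquely. I would approach this via a counting / pigeonhole argument combined with the Singleton Bound, which by hypothesis is met with equality.

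First I would fix an arbitrary set $I\subseteq\{0,\ldots,n-1\}$ of $k$ coordinate positions and consider the projection (puncturing) map $\pi_I\colon\code\to\F_q^k$ that restricts a codeword to the coordinates in $I$. The goal is to show $\pi_I$ is injective. Since $|\code|=q^k=|\F_q^k|$, injectivity of $\pi_I$ is equivalent to surjectivity, but the cleanest route is a direct argument by contradiction: suppose two distinct codewords $\vec{c},\vec{c}'\in\code$ agree on all $k$ positions of $I$. Then $\Hd{\vec{c},\vec{c}'}\leq n-k$, since they can differ only in the remaining $n-k$ positions. But $\vec{c}\neq\vec{c}'$ forces $\Hd{\vec{c},\vec{c}'}\geq\dmin = n-k+1$ by the MDS property, a contradiction. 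Hence $\pi_I$ is injective, so the $k$ coordinates in $I$ determine $\vec{c}$ unambiguously.

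I would then note that $I$ was arbitrary, so the claim holds for every choice of $k$ coordinates. One should also observe (or at least keep in mind) the linear-algebra restatement: the argument shows every $k\times k$ submatrix of a generator matrix of $\code$ is invertible, which is the concrete form in which the property gets used for decoding — recovering the information symbols by inverting the relevant submatrix.

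The only subtle point — hardly an obstacle — is making sure the direction of the inequality is used correctly: the MDS equality $\dmin=n-k+1$ is what makes the bound $\Hd{\vec{c},\vec{c}'}\leq n-k$ strictly incompatible with $\vec c\neq\vec c'$; for a general (non-MDS) code with $\dmin<n-k+1$ the argument would fail, as it should. So the proof is essentially one line once the Singleton equality is invoked, and no real computation is needed.
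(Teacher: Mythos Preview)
Your argument is correct and matches the paper's proof essentially line for line: both observe that two distinct codewords can coincide in at most $n-\dmin=k-1$ positions, so agreement on any $k$ positions forces equality. Your version is simply more explicit (setting up $\pi_I$ and the contradiction), and the added remark about invertibility of every $k\times k$ submatrix of a generator matrix is a useful complement that the paper does not state.
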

\begin{proof}
We have $\dmin=n-k+1$, hence in any codeword at most $k-1$ positions can coincide. Thus any $k$ positions uniquely determine the codeword.
\end{proof}

This allows the following procedure. Assume that a binary message of length $kq$ bit should be transmitted. Let us denote a symbol from $\F_q$ as a {\em segment}. The binary message can be considered as an information vector containing $k$ segments. Theses segments can be encoded into $n\geq k$ segments using an MDS code $\code(q; n, k, \dmin)$. The resulting $n$ segments are transmitted over a network. Then, by Theorem~\ref{theorem:k-is-enough}, the receiver is able to reconstruct the $n$ segments if it receives {\bf arbitrary} $k$ segments.

\section{Network Coding Layer}\label{sec:layer}

We are now ready to explain our proposed scheme for End--to--End algebraic network coding based on MDS codes. It works as a transparent new layer between the Internet and the Transport layer of the Internet protocol stack \cite{rfc1122:1989}, see Figure~\ref{fig:stack}. Transparent means that the functionality of the NC layer is hidden to the existing layers, which simplifies deployment in existing networks.

\begin{figure}[htbp]
\centering
\includegraphics[width=150pt]{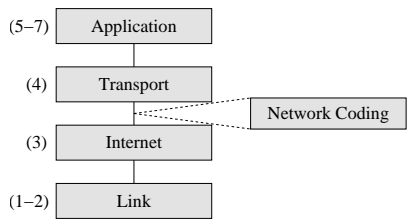}
\caption{The Internet protocol stack extended by the NC layer.}
\label{fig:stack}
\end{figure}

In our considerations, we identify the Transport layer with the TCP layer and the Internet Layer with the IP layer.
For simplicity, we assume a half duplex communication, i.e that payload data is only transmitted in one direction.
Further we assume that the connection setup and termination is always initiated by the transmitter side TCP instance.
The generalization to full duplex is straightforward and does not yield any additional insights.

Connection management segments like SYN or FIN usually pass the NC layer in a transparent manner. However, such segments invoke some special connection management segment handling routines, see Section~\ref{subsec:conman} for details about connection management.

At the transmitter, data segments arriving from the TCP are buffered until $k$ segments are in the buffer. As soon as the 
buffer contains $k$ segments (the information vector), they are encoded\footnote{See Section~\ref{subsec:header} for details 
about which parts of the TCP segments are encoded.} into $n$ segments (the codeword) using an MDS code $\code(q; n, k, \dmin)$.
The parameters $q$ and $k$ of the code must be chosen such that their product $kq$ matches the data segment sizes arriving from the TCP.

To cope with this restriction the NC instance has to change the \emph{Maximum Segment Size (MSS)} parameter \cite{rfc879}
offered by the receiving TCP instance during the connection negotiation If the transmitting TCP instance uses segment
sizes less than the (modified) MSS, the NC layer has to pad the data. 
Another approach would be to implement a strategy for segmenting in the NC layer.
For convenience we assume in the following that the TCP instance always uses segments with size equal to the MSS. 

After encoding, the $n$ segments are passed to the IP layer.
At the receiver, the NC layer acknowledges every received segment from the IP layer. It collects up to $k$ segments in a buffer.
This quantity is sufficient for decoding since $\code$ bears the MDS property. After decoding, the NC layer hands the decoded $k$ segments (the segments which arrived from the TCP at the transmitter) to the receiver side TCP.

Back at the transmitter, the NC layer counts the acknowledgments of coded segments. If their number is greater than some system parameter which we call the {\em speculative ACK threshold}, then the NC layer acknowledges all $k$ data segments to the TCP. This is a {\em speculative acknowledgment} since at this time the receiver side TCP did not necessarily receive all the $k$ data segments -- some might still be stuck within the NC layer. However, the receiver side NC layer will be able to reconstruct the missing data segments from one of the following received codewords. Our simulations show that this happens very fast if the system parameters are chosen appropriately.

The NC layer behavior at the transmitter and receiver is shown in Algorithms~\ref{alg:transmitter} and \ref{alg:receiver}, respectively.
Note that the procedures handle\_TCP\_handshake() and handle\_TCP\_teardown() are both capable of handling incoming SYN segments from TCP and IP.

\begin{algorithm}[htbp]
{%

  \While{true}
  {
   
    $\mathrm{buffer} \leftarrow [\;]$\;
    $\mathrm{ACKcount} \leftarrow 0$\;

    \While{$\vert \mathrm{buffer} \vert < k$}
    {
      receive TCP segment from TCP\;

      \If{TCP segment is SYN}
      {
        handle\_TCP\_handshake()\;
      }
      \ElseIf{TCP segment is FIN}
      {
        handle\_TCP\_teardown()\;
        reset NC layer\;
      }
      \ElseIf{TCP segment is RST}
      {
        forward segment to IP\;
        reset NC layer\;
      }
      \Else
      {
        put TCP segment into buffer\;
      }

    }

    encode $k$ TCP segments into $n$ NC segments\;
    hand over the $n$ NC segments to IP\;

    \While{$\mathrm{ACKcount} < s$}
    {
      receive NC acknowledgment from IP\;
      $\mathrm{ACKcount} \leftarrow \mathrm{ACKcount} + 1$\:
    }

    acknowledge all buffered segments to TCP\;

  }

}
\caption{Network Coding Layer at the Transmitter}
\label{alg:transmitter}
\end{algorithm}

\begin{algorithm}[htbp]
{%

  \While{true}
  {

    $\mathrm{buffer} \leftarrow [\;]$\;

    \While{$\vert \mathrm{buffer} \vert < k$}
    {

      receive NC segment from IP\;

      \If{NC segment is SYN}
      {
        handle\_TCP\_handshake()\;
      }
      \ElseIf{NC segment is FIN}
      {
        handle\_TCP\_teardown()\;
        reset NC layer\;
      }
      \ElseIf{NC segment is RST}
      {
        forward segment to TCP\;
        reset NC layer\;
      }
      \Else
      {
        put NC segment into buffer\;
      }

       hand over NC acknowledgment to IP\;
    }

    decode $k$ NC segments into $n$\;
    extract $k$ information segments, hand over to TCP\;
    discard TCP acknowledgment\;

  }

}
\caption{Network Coding Layer at the Receiver}
\label{alg:receiver}
\end{algorithm}

After the coarse description of the NC layer, we now focus on some details. The NC layer copies most of the TCP layer's functionalities,
 e.g. sequence numbers and the header structure. The following subsections deal with the relations, similarities and differences between the NC and the TCP layer.

\subsection{Header Structure}\label{subsec:header}

The NC layer protocol reuses parts of the TCP header without any modification. To reduce protocol overhead, these common header parts can be 
stripped off the TCP header before encoding of the TCP segments. They can be easily reconstructed at the receiver side by simple extraction from the NC header. The common and stripped--off header parts are source port, destination port, all control flags except ACK, and sequence number. See notes about the latter one in Section~\ref{subsec:seqnum}. The reused and stripped--off header fields are blue shaded in Figure~\ref{fig:headers}.

The remaining TCP header fields are not required by the NC layer. Thus, they are not part of the NC layer header and become part of the encoded TCP segment, i.e. the NC layer payload data. The TCP header fields which are encoded include acknowledgment number, offset, reserved, window, checksum, urgent pointer and options. They are non-shaded on the left-hand side of Figure~\ref{fig:headers}.

Besides the reused TCP header fields, the NC layer adds two additional header fields, i.e. symbol indicator (symb.) and NC options.
The new fields are yellow shaded on the right-hand side of Figure~\ref{fig:headers}. The symbol indicator is responsible
to determine the position of a segment within an MDS codeword, for details see the following section.
The NC options are not used in the current basic version of our protocol. They might be used to signal adaptions
of code rate or speculative ACK threshold in later versions.

\begin{figure}[htbp]
\centering
\includegraphics[width=252pt]{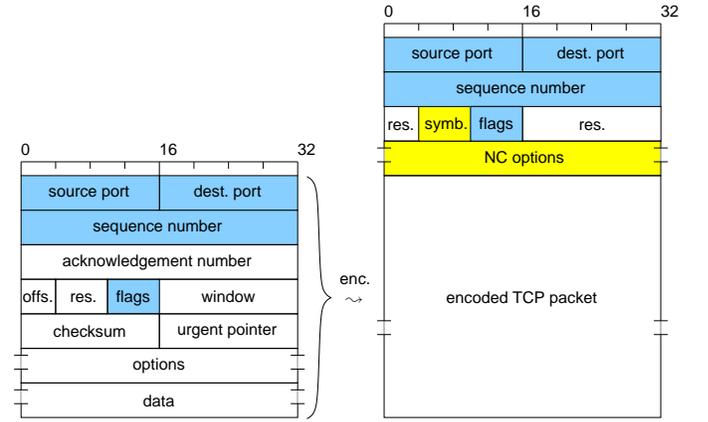}
\caption{Header structure of the TCP (left) and NC (right) layers.}
\label{fig:headers}
\end{figure}

\subsection{Sequence Numbers and Acknowledgments}\label{subsec:seqnum}

We have already mentioned the symbol indicator field in the previous section. In the TCP layer, the sequence number is responsible to order segments at the receiver side. The sequence number is incremented by one for each transmitted segment. Consider an information vector consisting of $k$ TCP segments. The position of each TCP segment is uniquely determined by it's sequence number. But now the information vector is encoded into a codeword consisting of $n$ segments. The NC layer should be able to uniquely determine the position of each segment, this is achieved by the symbol indicator field.

The procedure is as follows: The first $k$ segments of an MDS codeword simply copy the sequence numbers from TCP to their NC sequence number field. Their symbol indicator field is set to all zeros. The remaining $n-k$ segments share the sequence number of the $k$-th segment but their symbol indicator field is incremented by one for each segment. Simple concatenation of sequence number and symbol indicator provides unique identification of each segment's position within an MDS codeword. More precisely, let $\mu$ denote the TCP sequence number and $\nu$ the symbol indicator. Then the value $2^8 \mu+\nu$ uniquely determines the position of each segment.

The length of the symbol indicator field is $8$ bit, hence an MDS codeword can contain at most $256$ redundancy segments. This bounds the minimal code rate by
\begin{equation*}
R=\frac{k}{n}=\frac{k}{k+(n-k)}\geq\frac{k}{k+256}\geq\frac{1}{257},
\end{equation*}
which seems to be sufficient for all practical applications.

No TCP acknowledgments are transmitted over an NC enabled network. On the receiver side, any TCP acknowledgments are discarded immediately. On the transmitter side, the TCP acknowledgments are ''transmitted'' only between the NC and the TCP layers. Hence, the ACK flag is the only flag which is not reused and stripped off before encoding. It is overwritten by the NC layer and used for NC acknowledgment segments, i.e. NC layer segments with empty payload transmitted from the receiver to the transmitter.

\subsection{Connection Management}\label{subsec:conman}
Besides the change of the MSS, the NC layer does not introduce any new routines for connection setup and teardown.
This functions are inherited from the TCP layer. 
If any connection management segment is detected by the NC layer it starts the corresponding routine. 
In case of a TCP teardown or a connection reset, the NC layer schedules a self--reset immediately.
The result of this is that the TCP and the NC layer share a common state for the full duration of a connection.

\section{Simulation Results}\label{sec:sim}

In this section we give simulation results to obtain some insights in the general behavior,
 advantages and disadvantages of the proposed End--to--End algebraic network coding based on MDS codes scheme.
 All simulations were done using the network simulator NS--2 \cite{ns-2} and a basic network topology with one bottleneck link, see Figure~\ref{fig:topology}. Traffic is generated by two FTP sources A1 and A2 which transmit to sinks S1 and S2, respectively. The bottleneck in this setup is link N3$\rightarrow$N4, with a reduced data rate of only 1 Mbit/s. NC layer segments are erased within the network with probability {\em PER (packet erasure rate)}.

\begin{figure}[htbp]
\centering
\includegraphics[width=210pt]{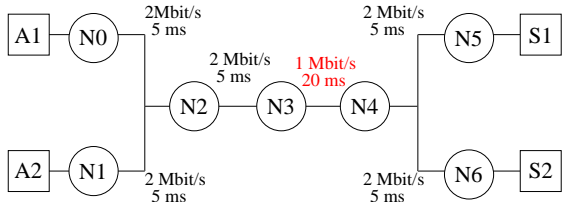}
\caption{Basic network topology for throughput and fairness simulations.}
\label{fig:topology}
\end{figure}

For all simulations we have used an algebraic code with rate $R=k/n=1/2$ and MDS property. We refrain from giving the detailed code parameters here since this section is devoted to the general behavior of the NC scheme.

Figure~\ref{fig:comparison} compares the achievable network throughput in TCP segments per second for a range of PER values. As expected, the NC--enabled TCP is outperformed by traditional TCP in channels with low PER. However, when the channel state deteriorates the traditional TCP throughput quickly falls below the NC--enabled version's throughput, which remains more or less constant over the full PER range. This gives rise to a natural extension of the NC scheme, which adapts the coding rate according to the current state of the channel. The NC options field in the NC layer header can be utilized for this purpose, see Section~\ref{subsec:header}.

\begin{figure}[htbp]
\centering
\includegraphics[width=252pt]{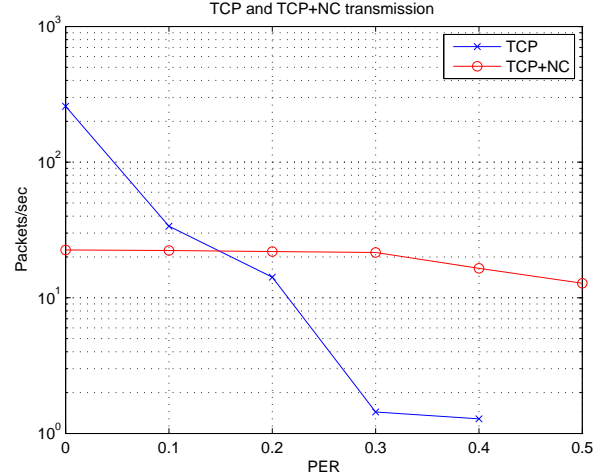}
\caption{Throughput comparison between NC--enabled TCP and traditional TCP for a range of packet erasure probabilities.}
\label{fig:comparison}
\end{figure}

The throughput gain of NC--enabled TCP can be interpreted as follows. In a traditional TCP environment, the protocol copes with increased PER by retransmission of non-acknowledged segments by {\em adaptive repeat request (ARQ)}. This causes a significant amount of delay, especially when the PER is very high. The NC--enabled TCP's strategy to cope with increased PER is to send redundant segments in advance, such that the receiver is not required to wait for requested retransmissions even if segments were erased in the network. Depending on the round trip time and the selected code parameters, this enables the NC--enabled TCP to achieve a higher throughput than traditional TCP as shown by our simulations.

Besides throughput, the most important performance measure of the NC--enabled TCP is fairness, i.e. how fair available network capacities are distributed among the transmitting nodes of a network. We should distinguish between fairness
\begin{itemize}
\item of the NC--enabled TCP against traditional TCP and
\item of a pure NC--enabled environment.
\end{itemize}
The fairness simulation results for both cases are shown in Figure~\ref{fig:nc_vs_tcp} and Figure~\ref{fig:nc_vs_nc}, respectively. In both figures, two connections are active over the bottleneck link. In terms of Figure~\ref{fig:topology} this means that node A1 is transmitting to node S1 and node A2 is transmitting to node S2.

In the NC/TCP case of Figure~\ref{fig:nc_vs_tcp}, it can be seen that NC--enabled TCP does not behave fair against traditional TCP. It dominates the bottleneck link and at any time instance achieves a higher throughput compared to traditional TCP. However, traditional TCP still can use a constant and non-zero share of the bottleneck's capacity, and hence gradual deployment of NC--enabled TCP is still feasible.

\begin{figure}[htbp]
\centering
\includegraphics[width=252pt]{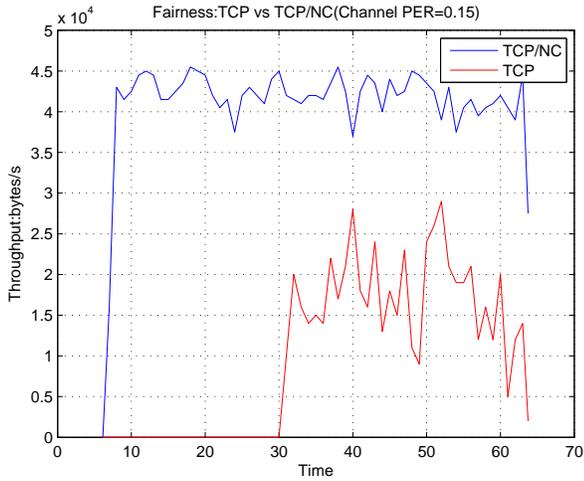}
\caption{Fairness comparison in a mixed environment where the traditional TCP transmission starts before the NC--enabled TCP transmission.}
\label{fig:nc_vs_tcp}
\end{figure}

In the NC/NC case of Figure~\ref{fig:nc_vs_nc}, we observe that after a short balancing period, both NC--enabled TCP links receive approximately the same share of the bottleneck's capacity. Thus, our scheme can be considered as fair in a homogeneous environment.

\begin{figure}[htbp]
\centering
\includegraphics[width=252pt]{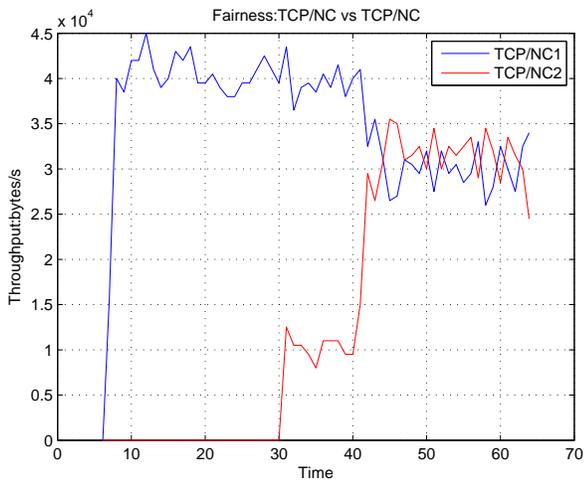}
\caption{Fairness comparison in an NC--enabled TCP environment.}
\label{fig:nc_vs_nc}
\end{figure}

For reference, we also give fairness simulation results for a pure traditional TCP environment in Figure~\ref{fig:tcp_vs_tcp}.

\begin{figure}[htbp]
\centering
\includegraphics[width=252pt]{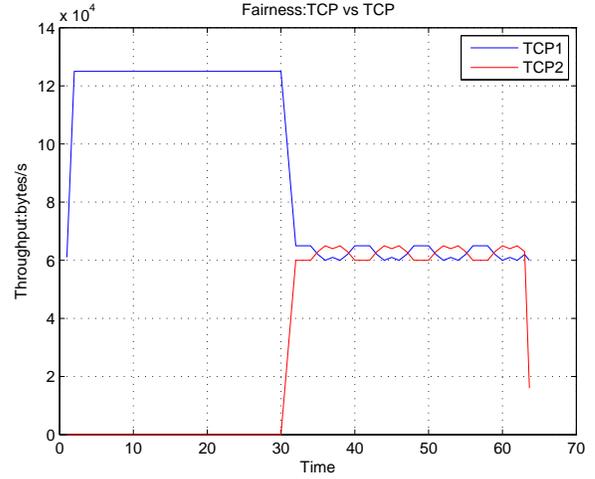}
\caption{Fairness comparison in a traditional TCP environment.}
\label{fig:tcp_vs_tcp}
\end{figure}

\section{Conclusions}
In this paper we proposed a network coding scheme based on MDS codes for application in TCP/IP networks over wireless links. A new layer between the IP and TCP layers was introduced, which hides segment losses caused by the data link layer to the TCP. This is achieved by encoding a set of $k$ TCP segments into a larger set $n$ of encoded NC layer segments and transmitting the larger set. The receiver is then able to reconstruct all $n$ segments by receiving an arbitrary subset of cardinality $k$ by using the MDS property of the used code.

By simulations it was shown that the throughput of a connection is superior compared to traditional TCP if the packet erasure probability is sufficiently large.

Furthermore, the fairness between two TCP streams was studied. It turns out that an NC--enabled TCP stream dominates a traditional TCP stream. However, the fairness property is retained if both streams use NC--enabled TCP.

The proposed network coding scheme rises up an abundance of new research questions. First, optimal code parameters should be derived for realistic settings, i.e. the product of symbol length and dimension should match the average segment size of TCP to avoid padding. Second, the scheme should be modified such that it can adapt to the channel conditions. It is counter-productive to apply encoding of segments when the packet erasure rate is small and in this case the coding rate bounds the achievable throughput. In an adaptive version of the scheme, the code rate might be adapted such that high rates (in the extremal case even code rate $R=1$ which means no encoding at all) are used at low packet erasure rates and smaller code rates are used in the opposite case. Third, one might think about exploiting the error correction capabilities of the utilized MDS codes, i.e. to cope with maliciously introduced packet errors.

\def\noopsort#1{}

\end{document}